\newcommand{\D}{\mathrm{d}}
\newcommand{\R}{\mathbb{R}}
\newcommand{\B}{\operatorname{B}}
\newcommand{\N}{\mathbb{N}}
\newcommandx\fid[5][usedefault, addprefix=\global, 1=, 2=, 3=, 4=, 5=]{\tensor*[_{#3}^{#2}]{#1}{_{#5}^{#4}}}
\newcommand{\g}{\operatorname{g}}
\newcommand{\id}{\operatorname{I}}
\begin{document}
\title{Output-feedback online optimal control for a class of nonlinear systems}
\author{Ryan Self, Michael Harlan, and Rushikesh Kamalapurkar\thanks{The authors are with the School of Mechanical and Aerospace Engineering, Oklahoma State University, Stillwater, OK, USA. {\tt\small \{rself, michael.c.harlan, rushikesh.kamalapurkar\}@okstate.edu}.}}
\maketitle
\begin{abstract}
	In this paper an output-feedback model-based reinforcement learning (MBRL) method for a class of second-order nonlinear systems is developed. The control technique uses exact model knowledge and integrates a dynamic state estimator within the model-based reinforcement learning framework to achieve output-feedback MBRL. Simulation results demonstrate the efficacy of the developed method.
\end{abstract}
\section{Introduction\label{sec:Introduction}}
Over the past decade, online reinforcement learning algorithms that guarantee stability during the learning phase have been developed for deterministic systems \cite{SCC.Chen.Jagannathan2008,SCC.Mehta.Meyn2009,SCC.Vrabie.Lewis2010,SCC.Vamvoudakis.Lewis2010,SCC.Lewis.Vrabie.ea2012,SCC.Lee.Park.ea2012,SCC.Modares.Lewis.ea2013,SCC.Bhasin.Kamalapurkar.ea2013a,SCC.Bian.Jiang.ea2014,SCC.Kiumarsi.Lewis.ea2014,SCC.Modares.Lewis2014,SCC.Bian.Jiang.ea2015,SCC.Li.Liu.ea2015,SCC.Yang.Liu.ea2015,SCC.Zhao.Xu.ea2015}; however, stability and convergence are established under restrictive persistence of excitation (PE) conditions which are difficult, if not impossible, to verify. To soften the PE condition, data-driven methods that employ experience replay have been utilized in results such as \cite{SCC.Cichosz1999,SCC.Wawrzynski2009,SCC.Zhang.Cui.ea2011,SCC.Adam.Busoniu.ea2012,SCC.Luo.Wu.ea2014,SCC.Modares.Lewis.ea2014,SCC.Li.Liu.ea2015}; however, since the data is collected along the system trajectory, added exploration signals are often required to achieve convergence. The need for PE and exploration signals is a result of sample inefficiency, and is a significant drawback of the existing model-free RL-based online optimal control methods.

Model-based reinforcement learning (MBRL) algorithms learn a model of the system from observations using supervised learning and employ the model to learn the policies. Several different MBRL approaches have been developed in the literature over the last few decades. Imaginary roll-outs, i.e., the use of a model as a proxy for the real world to evaluate temporal difference errors (referred to as Bellman errors (BEs) in this paper) are explored in results such as \cite{SCC.Sutton1991a} and  \cite{SCC.Lampe.Riedmiller2014}. While the sample efficiency is of the policy learning algorithms is improved, the performance of the method in \cite{SCC.Sutton1991a} decays rapidly with model mismatch, and the method in \cite{SCC.Lampe.Riedmiller2014} relies on fitting neural networks to dynamics, which is typically data-intensive, nullifying the sample efficiency gain in the policy learning algorithm.

Policy gradient methods that rely on backpropagation through the model to compute the gradient of the state or the action value function with respect to the policy parameters are developed in results such as \cite{SCC.Abbeel.Quigley.ea2006,SCC.Deisenroth.Rasmussen2011,SCC.Heess.Wayne.ea2015,SCC.Grondman2015}; however, policy gradient methods are often iterative in nature and typically do not study stability during the learning phase, and as a result, are not suitable for real-time simultaneous learning and execution. MBRL methods with provable sample efficiency bounds have been developed in results such as \cite{SCC.Brafman.Tennenholtz2002,SCC.Kearns.Singh2002,SCC.Kakade.Kearns.ea2003,SCC.Nouri.Littman2009,SCC.Li.Littman.ea2011,SCC.Jung.Stone2010,SCC.Grande.Walsh.ea2014}; however, the theoretical guarantees are obtained under discretization of the continuous state space into finitely many discrete states and a finite action space, and as such, are not directly applicable to systems with continuous state and action spaces.

The MBRL technique developed by the authors in \cite{SCC.Kamalapurkar.Dinh.ea2015,SCC.Kamalapurkar.Walters.ea2016,SCC.Kamalapurkar.Walters.ea2016a,SCC.Kamalapurkar.Rosenfeld.ea2016,SCC.Kamalapurkar.Andrews.ea2017,SCC.Kamalapurkar.Klotz.ea2018} for continuous time and continuous space systems softens the excitation requirements used in results such as \cite{SCC.Chen.Jagannathan2008,SCC.Mehta.Meyn2009,SCC.Lewis.Vrabie2009,SCC.Vrabie.Lewis2009,SCC.Vrabie.Lewis2010,SCC.Vamvoudakis.Lewis2010,SCC.Lewis.Vrabie.ea2012,SCC.Vamvoudakis.Lewis2011a,SCC.Vamvoudakis.Lewis.ea2012,SCC.Lee.Park.ea2012,SCC.Modares.Lewis.ea2013,SCC.Bhasin.Kamalapurkar.ea2013a,SCC.Bian.Jiang.ea2014,SCC.Kiumarsi.Lewis.ea2014,SCC.Modares.Lewis2014,SCC.Bian.Jiang.ea2015,SCC.Song.Lewis.ea2015,SCC.Li.Liu.ea2015,SCC.Yang.Liu.ea2015,SCC.Zhao.Xu.ea2015} by utilizing a model of the system to simulate exploration, where the stability and the performance of the closed-loop system critically depends on the accuracy of the estimated model. A significant drawback of the online optimal control methods mentioned so far is that they require full state measurements. 

While model-based and model-free reinforcement learning can be achieved using output feedback instead of state feedback by making use of partially observable Markov decision processes (POMDPs), in general, POMDPs are undecidable if the objective is to find an optimal solution, and finding a near-optimal solution can also be NP-hard \cite{SCC.Papadimitriou.Tsitsiklis1987,SCC.Madani.Hanks.ea2003}. In this paper, the problem is formulated as a state estimation based reinforcement learning problem, and for a specific class of systems, an online solution is obtained that guarantees stability during the learning phase.

A recent result in \cite{SCC.Modares.Lewis.ea2016}, presents an offline model-free algorithm for linear systems to achieve optimality using output feedback. The objective in this paper is to develop an output-feedback model-based reinforcement learning method for a class of nonlinear systems under exact model knowledge. While the developed results can be extended to systems with uncertain models using model-learning methods such as \cite{SCC.Kamalapurkar2017a}, such extension is not a focus of this work.

The paper is organized as follows. A detailed description of the problem under consideration is provided in Section \ref{sec:Introduction}. To facilitate the subsequent analysis of the developed technique, section \ref{sec:Stability Under Optimal Feedback} examines the stability properties of optimal controllers under semidefinite cost functions for the class of systems under consideration. Section \ref{sec:Velocity-estimator-design} describes the state estimator used in the design. Section \ref{sec:Model-based Reinforcement Learning} describes the developed MBRL method. Section \ref{sec:Analysis} presents a Lyapunov-based stability analysis, Section \ref{sec:Simulation Results} presents simulation results, and Section \ref{sec:Conclusion} concludes the paper.
\section{Problem Description\label{sec:Problem Description}}
Consider a second order nonlinear system of the form
\begin{align}
\dot{p} & =q,\nonumber \\
\dot{q} & =f\left(x\right)+g\left(x\right)u,\nonumber \\
y & =p,\label{eq:Dynamics}
\end{align}
where $p\in\R^{n}$ and $q\in\R^{n}$ denote the generalized position states and the generalized velocity states, respectively, $x\coloneqq\begin{bmatrix}p^{T} & q^{T}\end{bmatrix}^{T}$ is the system state, $f:\R^{2n}\to\R^{n}$ is locally Lipschitz continuous, $ f\left(0\right)=0 $, and $y\in\R^{n}$ denotes the output. The drift dynamics $f$ are unknown and the control effectiveness $g:\R^{2n}\to\R^{n\times m}$ is known and locally Lipschitz. Systems of the form (\ref{eq:Dynamics}) encompass second-order linear systems and Euler-Lagrange models with known inertial matrices, and hence, represent a wide class of physical plants, including, but not limited to, robotic manipulators and autonomous ground, aerial, and underwater vehicles.

The objective is to design an adaptive estimator to estimate the state $x$, online, using input-output measurements and to simultaneously estimate and utilize the optimal feedback controller that minimizes the cost functional 
\begin{equation}
J\left(x\left(\cdot\right),u\left(\cdot\right)\right)=\int_{0}^{\infty}r\left(x\left(\tau\right),u\left(\tau\right)\right)\D\tau,\label{eq:cost}
\end{equation}
while maintaining system stability during the learning phase. The function $ r:\R^{n\times m}\to\R $ is defined as $r\left(x,u\right)\coloneqq Q\left(x\right)+u^{T}Ru $, where $Q:\R^{n}\to\R$ is continuous, $R\in\R^{m\times m}$ is a constant positive definite matrix, and $ \gamma\geq0 $ is the discount factor.\begin{assumption}
	\label{ass:CostRestrictions}One of the following is true:\begin{enumerate}[label=(\alph*)]
	\item $ Q $ is positive definite.
	\item $ Q $ is positive semidefinite and $ p\mapsto Q\left(x\right) $ is positive definite for all nonzero $ q\in\R^{n} $.
	\item $ Q $ is positive semidefinite, $ q\mapsto Q\left(x\right) $ is positive definite for all nonzero $ p\in\R^{n} $ and $ f\left(x\right)\neq 0 $ whenever $ p\neq 0 $.
\end{enumerate}\end{assumption}To facilitate control design, the stability properties of the closed-loop system under optimal feedback are examined.
\section{Stability Under Optimal state Feedback\label{sec:Stability Under Optimal Feedback}}
The following theorem establishes global asymptotic stability of the closed-loop system under optimal state feedback.
\begin{thm}\label{thm:Optimal GAS}
	If the optimal state feedback controller $ u^{*}:\R^{2n}\to\R^{m} $ that minimizes the cost function in \eqref{eq:cost} exists and if the corresponding optimal value function $ V:\R^{2n}\to\R $ is continuously differentiable and radially unbounded, then the origin of closed-loop system \begin{align}
	\dot{y} & =q,\nonumber \\
	\dot{q} & =f\left(x\right)+g\left(x\right)u^{*}\left(x\right),\label{eq:Closed-loop System}
	\end{align} is globally asymptotically stable.
\end{thm}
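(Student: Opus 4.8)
The plan is to use the optimal value function $V$ itself as a Lyapunov function for the closed-loop system \eqref{eq:Closed-loop System}. Writing the dynamics in control-affine form $\dot{x}=F\left(x\right)+G\left(x\right)u$ with $F\left(x\right)=\begin{bmatrix}q\\ f\left(x\right)\end{bmatrix}$ and $G\left(x\right)=\begin{bmatrix}0\\ g\left(x\right)\end{bmatrix}$, the hypothesized existence of $u^{*}$ together with $V\in C^{1}$ means that $V$ satisfies the Hamilton--Jacobi--Bellman equation $0=Q\left(x\right)+\left(\nabla V\left(x\right)\right)^{T}F\left(x\right)+\min_{u}\left[\left(\nabla V\left(x\right)\right)^{T}G\left(x\right)u+u^{T}Ru\right]$, whose minimizer is $u^{*}\left(x\right)=-\tfrac{1}{2}R^{-1}G\left(x\right)^{T}\nabla V\left(x\right)$. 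I would take these optimality relations as the starting point, invoking the standard verification argument, since the theorem already assumes a $C^{1}$, radially unbounded $V$ and an optimal $u^{*}$ exist.

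Next I would differentiate $V$ along the closed loop. Using the HJB identity to eliminate $\left(\nabla V\right)^{T}F$ and the minimizer formula $\left(\nabla V\right)^{T}G=-2\left(u^{*}\right)^{T}R$, the derivative collapses to $\dot{V}=\left(\nabla V\right)^{T}\left(F+Gu^{*}\right)=-Q\left(x\right)-u^{*}\left(x\right)^{T}Ru^{*}\left(x\right)=-r\left(x,u^{*}\left(x\right)\right)$. Because $R$ is positive definite and $Q\ge 0$, this gives $\dot{V}\le 0$, with equality exactly on the set $E=\left\{ x:Q\left(x\right)=0\text{ and }u^{*}\left(x\right)=0\right\}$. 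Combined with $V\left(0\right)=0$, $V\ge 0$, and the assumed radial unboundedness, $\dot{V}\le 0$ already yields compact invariant sublevel sets, hence forward completeness and boundedness of all trajectories.

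To upgrade this to global asymptotic stability I would (i) confirm that $V$ is positive definite and (ii) apply the Barbashin--Krasovskii/LaSalle invariance theorem, showing that the largest invariant set contained in $E$ is $\left\{ 0\right\}$. On $E$ we have $u^{*}=0$, so any solution confined to $E$ obeys the cascade $\dot{p}=q$, $\dot{q}=f\left(x\right)$ while keeping $Q\equiv 0$. This is exactly where the three cases of Assumption \ref{ass:CostRestrictions} enter: in case (a), $Q$ positive definite forces $E=\left\{ 0\right\}$ immediately; in case (b), $Q\equiv 0$ forces $p\equiv 0$, whence $q=\dot{p}\equiv 0$; in case (c), $Q\equiv 0$ forces $q\equiv 0$, whence $\dot{q}=f\left(x\right)\equiv 0$, and the hypothesis that $f\left(x\right)\ne 0$ whenever $p\ne 0$ then forces $p\equiv 0$. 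The same reasoning applied to a zero-cost optimal trajectory shows $V\left(x\right)>0$ for $x\ne 0$, giving positive definiteness.

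The main obstacle is step (ii) in the degenerate cost cases (b) and (c), where $Q$ is only positive semidefinite: ruling out nontrivial trajectories along which the running cost vanishes identically requires exploiting the specific cascade structure $\dot{p}=q$ and the detectability-type conditions imposed on $Q$ and, in case (c), on $f$. A secondary technical point is the rigorous justification that the hypothesized $V$ and $u^{*}$ satisfy the HJB equation and minimizer formula; I would dispatch this with the standard verification theorem, which applies precisely because $V$ is assumed to be $C^{1}$.
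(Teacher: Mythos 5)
Your proposal is correct and takes essentially the same route as the paper's proof: both use the optimal value function $V$ as the Lyapunov function, obtain $\dot{V}=-Q\left(x\right)-u^{*}\left(x\right)^{T}Ru^{*}\left(x\right)\leq 0$ from the HJB equation, establish positive definiteness of $V$ by analyzing zero-cost optimal trajectories under the three cases of Assumption \ref{ass:CostRestrictions}, and conclude via Lyapunov's direct method in case (a) and the invariance principle in cases (b) and (c). The only cosmetic difference is that your invariant-set analysis in case (c) is stated directly (on the invariant set, $q\equiv 0$ forces $f\equiv 0$ and hence $p\equiv 0$), whereas the paper phrases the same fact as the origin being the only equilibrium of the closed loop.
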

\begin{proof}
	Under the hypothesis of Theorem \ref{thm:Optimal GAS}, the optimal value function is the unique solution of the Hamilton-Jacobi-Bellman equation \cite[pp. 164]{SCC.Liberzon2012}\begin{equation}
	V_{y}\left(x\right)q+V_{q}\left(x\right)\left(f\left(x\right)+g\left(x\right)u^{*}\left(x\right)\right)+r\left(y,u^{*}\left(x\right)\right)=0,\label{eq:HJB}
\end{equation}with\begin{equation}
u^{*}\left(x\right)=-\frac{1}{2}R^{-1}g^{T}\left(x\right)V_{q}\left(x\right),
\end{equation}
where the notation $ x_{y} $ denotes the partial derivative of $ x $ with respect to $ y $. The function $ V $ is positive semidefinite by definition. Since the solutions of \eqref{eq:Closed-loop System} are continuous, if $ V\left(\begin{bmatrix}
y\\q
\end{bmatrix}\right) = 0$ for some $ x \neq 0 $, it can be concluded that $ Q\left(\phi\left(t;x,u^{*}\left(\cdot\right)\right)\right) = 0, \forall t\geq 0$, and $ u^{*}\left(\phi\left(t;x,u^{*}\left(\cdot\right)\right)\right)= 0, \forall t\geq 0 $, where $ \phi\left(t,x,u\left(\cdot\right)\right) $ denotes the trajectory of \eqref{eq:Dynamics}, evaluated at time $ t $, starting from the state $ x $ and under the controller $ u\left(\cdot\right) $. If Assumption \ref{ass:CostRestrictions}-(a) holds then $ \phi\left(t;x,u^{*}\left(\cdot\right)\right)=0,\forall t\geq 0$, which contradicts $ x\neq 0 $. If Assumption \ref{ass:CostRestrictions}-(b) holds, then $ p\left(t;x,u^{*}\left(\cdot\right)\right)=0,\forall t\geq 0 $. As a result, $ \phi\left(t;x,u^{*}\left(\cdot\right)\right)=0,\forall t\geq 0$, which contradicts $ x\neq 0 $. If Assumption \ref{ass:CostRestrictions}-(c) holds, then $ q\left(t;x,u^{*}\left(\cdot\right)\right)=0,\forall t\geq 0 $. As a result, $ p\left(t;x,u^{*}\left(\cdot\right)\right)=c,\forall t\geq 0 $ for some constant $ c\in\R^{n} $. Since $ f\left(x\right)\neq 0 $ if $ p\neq 0 $, it can be concluded that $ c=0 $, which contradicts $ x\neq 0 $. Hence, $ V\left(x\right) $ cannot be zero for a nonzero $ x $. Furthermore, since $ f\left(0\right)=0 $, the zero controller is clearly the optimal controller starting from $ x=0 $. That is, $ V\left(0\right)=0 $, and as a result, $ V:\R^{2n}\to\R $ is positive definite.

Using $ V $ as a candidate Lyapunov function and using the HJB equation in \eqref{eq:HJB}, it can be concluded that\begin{equation*}
V_{y}\left(x\right)q+V_{q}\left(x\right)\left(f\left(x\right)+g\left(x\right)u^{*}\left(x\right)\right)\leq-Q\left(x\right),
\end{equation*}$ \forall x\in\R^{2n}. $ If Assumption \ref{ass:CostRestrictions}-(a) holds, then the proof is complete using Lyapunov's direct method. If Assumption \ref{ass:CostRestrictions}-(b) holds, then using the fact that if the output is identically zero then so is the state, the invariance principle \cite[Corollary 4.2]{SCC.Khalil2002} can be invoked to complete the proof. If Assumption \ref{ass:CostRestrictions}-(c) holds, then finiteness of the value function everywhere implies that the origin is the only equilibrium point of the closed-loop system. As a result, the invariance principle can be invoked to complete the proof.
\end{proof}
Using Theorem \ref{thm:Optimal GAS} and the converse Lyapunov theorem for asymptotic stability \cite[Theorem 4.17]{SCC.Khalil2002}, the existence of a radially unbounded positive definite function $ \mathcal{V}:\R^{2n}\to\R $ and a positive definite function $ W:\R^{2n}\to\R $ is guaranteed such that\begin{equation}\label{eq:Converse Lyapunov Function}
\mathcal{V}_{y}\left(x\right)q+\mathcal{V}_{q}\left(x\right)\left(f\left(x\right)+g\left(x\right)u^{*}\left(x\right)\right)\leq-W\left(x\right),
\end{equation}$ \forall x\in\R^{2n} $. The functions $ \mathcal{V} $ and $ W $ are utilized to analyze the stability of the output feedback approximate optimal controller. 
\section{Velocity Estimator Design\label{sec:Velocity-estimator-design}}
To generate estimates of the generalized velocity, a velocity estimator inspired by \cite{SCC.Dinh.Kamalapurkar.ea2014} is developed. The estimator is given by
\begin{align}
\dot{\hat{p}} & =\hat{q},\nonumber \\
\dot{\hat{q}} & =f\left(\hat{x}\right)+g\left(\hat{x}\right)u+\nu,\label{eq:Estimator}
\end{align}
where $\hat{x}$, $\hat{p}$, and $\hat{q}$ are estimates of $x$, $p$, and $q$, respectively, and $\nu$ is a feedback term designed in the following.

To facilitate the design of $\nu$, let $\tilde{p}=p-\hat{p}$, $\tilde{q}=q-\hat{q}$, and let
\begin{equation}
r=\dot{\tilde{p}}+\alpha\tilde{p}+\eta,\label{eq:r}
\end{equation}
where the signal $\eta$ is added to compensate for the fact that the generalized velocity state, $q$, is not measurable. Based on the subsequent stability analysis, the signal $\eta$ is designed as the output of the dynamic filter 
\begin{align}
\dot{\eta} & =-\beta\eta-kr-\alpha\tilde{q},\quad\eta\left(T_{0}\right)=0,\label{eq:eta Update}
\end{align}
where $\alpha,$ $k,$ and $\beta$ are positive constants and the feedback component $\nu$ is designed as 
\begin{equation}
\nu=\alpha^{2}\tilde{p}-\left(k+\alpha+\beta\right)\eta.\label{eq:Feedback}
\end{equation}
The design of the signals $\eta$ and $\nu$ to estimate the state from output measurements is inspired by the $p-$filter \cite{SCC.Xian.Queiroz.ea2004}. Using the fact that $\tilde{p}\left(0\right)=0$, the signal $\eta$ can be implemented via the integral form
\begin{equation}
\eta\left(t\right)=-\int\limits_{T_{0}}^{t}\left(\beta+k\right)\eta\left(\tau\right)\D\tau-\int\limits_{T_{0}}^{t}k\alpha\tilde{p}\left(\tau\right)\D\tau-\left(k+\alpha\right)\tilde{p}\left(t\right).\label{eq:IntegralUpdateEta}
\end{equation}
\section{Model-based Reinforcement Learning\label{sec:Model-based Reinforcement Learning}}
To estimate the optimal state feedback policy, the optimal value function, defined as\begin{equation*}
V\left(x\right)\coloneqq\min_{u\left(\cdot\right)}\int\limits_t^{\infty}r\left(\phi\left(\tau,x,u\left(\cdot\right)\right),u\left(\cdot\right)\right)\D\tau.
\end{equation*}The optimal value function $ V $ and the optimal policy $ u^{*} $ are approximated using parametric approximators $ \hat{V}:\R^{2n}\times\R^{L}\to\R $ and $ \hat{u}:\R^{2n}\times\R^{L}\to\R^{m} $ defined as\begin{align}
\hat{V}\left(x,W_{c}\right)&\coloneqq W_{c}^{T}\sigma\left(x\right), \mathrm{and}\\
\hat{u}\left(x,W_{a}\right)&\coloneqq-\frac{1}{2}R^{-1}g^{T}\left(x\right)\sigma_{x}^{T}\left(x\right)W_{a},
\end{align}where $ \sigma\coloneqq\left[\sigma_{1}\cdots,\sigma_{L}\right]$, $\sigma_{i}:\R^{2n}\to\R $ for all $ i $ is the vector of basis functions and $ W_{c}\in\R^{L} $ and $ W_{a}\in\R^{L} $ are estimates of the ideal parameters $ W\in\R^{L} $. The corresponding approximation error $ \epsilon:\R^{2n}\to\R $ is defined as $ \epsilon\left(x\right)\coloneqq V\left(x\right)-\hat{V}\left(x,W\right) $. Provided the basis functions are selected from an appropriate class of functions, for any given compact ball $ \overline{\B}\left(0,\chi\right)\subset\R^{2n} $, and any given $ \overline{\epsilon} $ there exists $ L\in\N $, a set of basis functions $ \left\{\sigma_{1},\cdots,\sigma_{L}\right\} $, and $ W\in\R^{L} $ such that $ \overline{\left\Vert \epsilon\right\Vert}_{\chi}<\overline{\epsilon} $ and $ \overline{\left\Vert \epsilon_{x}\right\Vert}_{\chi}<\overline{\epsilon} $, where $ \overline{\left\Vert \epsilon \right\Vert}_{\chi} $ denotes $ \sup_{x\in\overline{\B}\left(0,\chi\right)}\left\Vert \epsilon\left(x\right) \right\Vert $ (see \cite{SCC.Hornik.Stinchcombe.ea1990,SCC.Hornik1991,SCC.Abu-Khalaf.Lewis2005}).

Substituting the estimates $ \hat{V} $, $ \hat{u} $, and $ \hat{x} $ in \eqref{eq:HJB}, the Bellman error $ \delta:\R^{2n}\times\R^{L}\times\R^{L}\to\R $ is obtained as\begin{multline}
\delta\left(\hat{x},W_{c},W_{a}\right)=\hat{V}_{q}\left(\hat{x},W_{c}\right)\left(f\left(\hat{x}\right)+g\left(\hat{x}\right)\hat{u}\left(\hat{x},W_{a}\right)\right)\\+\hat{V}_{y}\left(\hat{x},W_{c}\right)\hat{q}+r\left(\hat{y},\hat{u}\left(\hat{x},W_{a}\right)\right),\label{eq:BE}
\end{multline}

Similar to \cite{SCC.Kamalapurkar.Walters.ea2016}, the technique developed in this result implements simulation of experience in a model-based RL scheme by using the system model to extrapolate the approximate BE to unexplored areas of the state space.  In the following, the trajectories of the state and the weight estimates $ W_{c} $ and $ W_{a} $, evaluated at time $ t $ starting from appropriate initial conditions are denoted by $ x\left(t\right) $, $ W_{c}\left(t\right) $ and $ W_{a}\left(t\right) $, respectively. The notation\footnote{For $a\in\R,$ the notation $\R_{\geq a}$ denotes the interval $\left[a,\infty\right)$ and the notation $\R_{>a}$ denotes the interval $\left(a,\infty\right)$.} $\delta_{t}:\R_{\geq 0}\to\R$ denotes the BE in \eqref{eq:BE}, evaluated along the trajectories of the state and the weight estimates as $\delta_{t}\left(t\right)\coloneqq\delta\left(\hat{x}\left(t\right),\hat{W}_{c}\left(t\right),\hat{W}_{a}\left(t\right)\right)$ and $\delta_{ti}:\R_{\geq 0}\to\R$ denotes BE extrapolated along the trajectories of the weight estimates and a predefined set of trajectories $\left\{ x_{i}:\R_{\geq 0}\to\R^{n}\mid i=1,\cdots,N\right\} $ as $\delta_{ti}\left(t\right)\coloneqq\delta\left(x_{i}\left(t\right),\hat{W}_{c}\left(t\right),\hat{W}_{a}\left(t\right)\right)$. A least-squares update law for the value function weights is designed based on the subsequent stability analysis as 
\begin{align}
\dot{\hat{W}}_{c} & =-\frac{k_{c}}{N}\Gamma\sum_{i=1}^{N}\frac{\omega_{i}}{\rho_{i}}\delta_{ti},\label{eq:OFBADP1criticupdate}\\
\dot{\Gamma} & =\beta\Gamma-\frac{k_{c}}{N}\Gamma\sum_{i=1}^{N}\frac{\omega_{i}\omega_{i}^{T}}{\rho_{i}^{2}}\Gamma,\label{eq:OFBADP1GammaDot}
\end{align}
$\Gamma\left(t_{0}\right)=\Gamma_{0},$ where $\Gamma:\R_{\geq t_{0}}\to\R^{L\times L}$ is a time-varying least-squares gain matrix, $\omega_{i}\left(t\right)\coloneqq\sigma_p\left(x_{i}\left(t\right)\right)q_{i}\left(t\right)+\sigma_q\left(x_{i}\left(t\right)\right)\left(f\left(x_{i}\left(t\right)\right)+g\left(x_{i}\left(t\right)\right)\hat{u}\left(x_{i}\left(t\right),W_{a}\left(t\right)\right)\right),$ $\rho_{i}\left(t\right)\coloneqq1+\gamma_{1}\omega_{i}^{T}\left(t\right)\omega_{i}\left(t\right)$, where $\gamma_{1}\in\R$ is a constant positive normalization gain, $\beta>0\in\R$ is a constant forgetting factor, and $k_{c}>0\in\R$ is a constant adaptation gain. 

The policy weights are updated to follow the value function weights as
\begin{multline}
\dot{W}_{a}=-k_{a1}\left(W_{a}-W_{c}\right)-k_{a2}W_{a}\\
+\sum_{i=1}^{N}\frac{k_{c}G_{i}^{T}W_{a}\omega_{i}^{T}}{4N\rho_{i}}W_{c},\label{eq:OFBADP1actorupdate}
\end{multline}
where $k_{a1},\:k_{a2}\in\R$ are positive constant adaptation gains,  $G_{i}\left(t\right)\coloneqq\sigma_{xi}\left(t\right)g_{i}\left(t\right)R^{-1}g_{i}^{T}\left(t\right)\sigma_{xi}^{T}\left(t\right)\in\R^{L\times L}$, $g_{i}\left(t\right)=g\left(x_{i}\left(t\right)\right)$ and $\sigma_{xi}\left(t\right)=\sigma_{x}\left(x_{i}\left(t\right)\right)$.
The following rank condition facilitates the subsequent analysis.
\begin{assumption}
	\label{ass:OFBADP1ADPLearnCond}There exists a finite set of trajectories $\left\{ x_{i}:\R_{\geq t_{0}}\to\R^{n}\mid i=1,\cdots,N\right\} $ and a constant $T\in\R_{>0}$ such that 
	\begin{gather}
	\underline{c}_{1}I_{L} \leq\inf_{t\in\R_{\geq t_{0}}}\left(\frac{1}{N}\sum_{i=1}^{N}\frac{\omega_{i}\left(t\right)\omega_{i}^{T}\left(t\right)}{\rho_{i}^{2}\left(t\right)}\right),\label{eq:OFBADP1PE1}\\
	\underline{c}_{2}I_{L}\leq \frac{1}{N}\int\limits_{t}^{t+T}\left(\sum_{i=1}^{N}\frac{\omega_{i}\left(\tau\right)\omega_{i}^{T}\left(\tau\right)}{\rho_{i}^{2}\left(\tau\right)}\right)\D\tau,\:\forall t\in\R_{\geq t_{0}},\label{eq:OFBADP1PE2}
	\end{gather}
	where, at least one of the nonnegative constants $\underline{c}_{1}$ and $\underline{c}_{2}$ is strictly positive.
\end{assumption}The rank conditions in (\ref{eq:OFBADP1PE1}) and (\ref{eq:OFBADP1PE2}) depend on the estimates $W_{a}$; hence, in general, they are impossible to guarantee a priori. However, unlike traditional adaptive dynamic programming literature that assumes that a regressor similar to $\omega_{i}$ evaluated along the system trajectories is PE, Assumption \ref{ass:OFBADP1ADPLearnCond} only requires the regressor $\omega_{i}$ to be persistently exciting. When the regressor is evaluated along the system state $x$ excitation in the regressor vanishes as the system states converge. Hence, in general, it is unlikely that a regressor evaluated along the system trajectories will be PE. However, the regressor $\omega_{i}$ depends on $x_{i}$, which can be designed independent of the system state $x$. Hence, $\underline{c}_{2}$ can be made strictly positive if the signal $x_{i}$ contains enough frequencies, and $\underline{c}_{1}$ can be made strictly positive by selecting a sufficient number of extrapolation trajectories, i.e., $N\gg L$. It is established in \cite[Lemma 1]{SCC.Kamalapurkar.Rosenfeld.ea2016} that under 
Assumption \ref{ass:OFBADP1ADPLearnCond} and provided $\lambda_{\min}\left\{ \Gamma_{0}^{-1}\right\} >0$, the update law in (\ref{eq:OFBADP1GammaDot}) ensures that the least squares gain matrix satisfies 
	\begin{equation}
	\underline{\Gamma}I_{L}\leq\Gamma\left(t\right)\leq\overline{\Gamma}I_{L},\label{eq:OFBADP1Gammabound}
	\end{equation}
	$\forall t\in\R_{\geq0}$ and for some $ \overline{\Gamma},\underline{\Gamma}>0 $.
\section{\label{sec:Analysis}Analysis}
The approximate BE, evaluated along the selected trajectories $\left\{ x_{i}\mid i=1,\cdots,N\right\} $, can be expressed as 
\begin{align}
\delta_{ti} & =-\omega_{i}^{T}\tilde{W}_{c}+\frac{1}{4}\tilde{W}_{a}^{T}G_{\sigma i}\tilde{W}_{a}+\Delta_{i},\label{eq:OFBADP1deltai2}
\end{align}
where $\nabla\epsilon_{i}=\nabla\epsilon\left(x_{i}\right)$, $f_{i}=f\left(x_{i}\right)$, $G_{i}\coloneqq g_{i}R^{-1}g_{i}^{T}\in\R^{n\times n}$, $\Delta_{i}\coloneqq\frac{1}{2}W^{T}\nabla\sigma_{i}G_{i}\nabla\epsilon_{i}^{T}+\frac{1}{4}G_{\epsilon i}-\nabla\epsilon_{i}f_{i}\in\R$ is a constant, $G_{\epsilon i}\coloneqq\nabla\epsilon_{i}G_{i}\nabla\epsilon_{i}^{T}\in\R$, and $G_{\sigma i}$ was introduced in (\ref{eq:OFBADP1actorupdate}). Using \eqref{eq:OFBADP1deltai2}, the time-derivative of the Lyapunov function introduced in \eqref{eq:Converse Lyapunov Function} along the trajectories of \eqref{eq:Dynamics} under the controller $ u\left(t\right)=\hat{u}\left(\hat{x\left(t\right)},W_{a}\left(t\right)\right) $ is given by\begin{equation*}
\dot{\mathcal{V}}\left(x,t\right)=\mathcal{V}_{y}\left(x\right)q+\mathcal{V}_{q}\left(x\right)\left(f\left(x\right)+g\left(\hat{x}\right)\hat{u}\left(\hat{x},W_{a}\right)\right)
\end{equation*}Adding and subtracting $ \mathcal{V}_{q}\left(x\right)\left(g\left(x\right)u^{*}\left(x\right)\right) $,\begin{multline*}
\dot{\mathcal{V}}\left(x,t\right)=\mathcal{V}_{y}\left(x\right)q+\mathcal{V}_{q}\left(x\right)\left(f\left(x\right)+g\left(x\right)u^{*}\left(x\right)\right)\\+\mathcal{V}_{q}\left(x\right)\left(g\left(\hat{x}\right)\hat{u}\left(\hat{x},W_{a}\right)-g\left(x\right)u^{*}\left(x\right)\right)
\end{multline*}Using \eqref{eq:Converse Lyapunov Function}, the fact that $ g $ is bounded, the basis functions $ \sigma $ are bounded, and that the value function approximation error $ \epsilon $ and its derivative with respect to $ x $ are bounded on compact sets, the time-derivative can be bounded as\begin{equation*}
\dot{\mathcal{V}}\left(x,t\right)\leq-W\left(x\right)+\iota_{1}\overline{\epsilon}+\iota_{2}\left\Vert \tilde{x}\right\Vert \left\Vert \tilde{W}_{a}\right\Vert +\iota_{3}\left\Vert \tilde{W}_{a}\right\Vert +\iota_{4}\left\Vert \tilde{x}\right\Vert,
\end{equation*}for all $ t\geq0 $ and for all $ x\in\overline{\B}\left(0,\chi\right) $ and $ \hat{x} \in\R^{2n}$, where $ \chi\subset\R^{2n} $ is a compact set, $ \iota_{1},\cdots,\iota_{4} $ are positive constants, and $ \tilde{x} \coloneqq x-\hat{x}$.

Let $ \Theta\left(\tilde{W}_{c},\tilde{W}_{a},t\right)\coloneqq \frac{1}{2}\tilde{W}_{c}^{T}\Gamma^{-1}\left(t\right)\tilde{W}_{c}+\frac{1}{2}\tilde{W}_{a}^{T}\tilde{W}_{a} $
 The time-derivative of $ \Theta $ along the trajectories of \eqref{eq:OFBADP1criticupdate}-\eqref{eq:OFBADP1actorupdate} is given by\begin{multline*}
\dot{\Theta}\left(\tilde{W}_{c},\tilde{W}_{a},t\right)=-\tilde{W}_{c}^{T}\Gamma^{-1}\left(-\frac{k_{c}}{N}\Gamma\sum_{i=1}^{N}\frac{\omega_{i}}{\rho_{i}}\delta_{ti}\right)\\
  -\frac{1}{2}\tilde{W}_{c}^{T}\left(\Gamma^{-1}\beta-\frac{k_{c}}{N}\sum_{i=1}^{N}\frac{\omega_{i}\omega_{i}^{T}}{\rho_{i}^{2}}\right)\tilde{W}_{c}\\
  -\tilde{W}_{a}^{T}\left(-k_{a1}\left(W_{a}-W_{c}\right)-k_{a2}W_{a}+\sum_{i=1}^{N}\frac{k_{c}G_{i}^{T}W_{a}\omega_{i}^{T}}{4N\rho_{i}}W_{c}\right)
 \end{multline*}
Using \eqref{eq:BE}, \begin{multline*}
\dot{\Theta}\left(\tilde{W}_{c},\tilde{W}_{a},t\right)\leq-k_{c}\underline{c}\left\Vert \tilde{W}_{c}\right\Vert ^{2}-\left(k_{a1}+k_{a2}\right)\left\Vert \tilde{W}_{a}\right\Vert ^{2}\\+k_{c}\iota_{8}\overline{\epsilon}\left\Vert \tilde{W}_{c}\right\Vert +k_{c}\iota_{5}\left\Vert \tilde{W}_{a}\right\Vert ^{2}+\left(k_{c}\iota_{6}+k_{a1}\right)\left\Vert \tilde{W}_{c}\right\Vert \left\Vert \tilde{W}_{a}\right\Vert \\+\left(k_{c}\iota_{7}+k_{a2}\overline{W}\right)\left\Vert \tilde{W}_{a}\right\Vert,
\end{multline*}for all $ t\geq0 $ and for all $ x\in\overline{\B}\left(0,\chi\right) $, where $ \iota_{5},\cdots,\iota_{8} $ are positive constants that are independent of the learning gains, $ \overline{W} $ denotes an upper bound on the norm of the ideal weights $ W $, and $ \underline{c} =\min_{t\geq 0} \lambda_{\min}\left\{\left(\frac{\beta}{2k_{c}}\Gamma^{-1}\left(t\right)+\frac{1}{2N}\sum_{i=1}^{N}\frac{\omega_{i}\omega_{i}^{T}}{\rho_{i}}\right)\right\}$. Assumption \ref{ass:OFBADP1ADPLearnCond} and \eqref{eq:OFBADP1Gammabound} guarantee that $ \underline{c}>0 $.

Let $ \Phi\left(\tilde{p},r,\eta\right)\coloneqq\frac{\alpha^{2}}{2}\tilde{p}^{T}\tilde{p}+\frac{1}{2}r^{T}r+\frac{1}{2}\eta^{T}\eta $. The time-derivative of $ \Phi $ along the trajectories of \eqref{eq:Dynamics} and \eqref{eq:Estimator}-\eqref{eq:Feedback} is given by\begin{multline*}
\dot{\Phi}\left(\tilde{p},r,\eta,t\right)=\alpha^{2}\tilde{p}^{T}\left(r-\alpha\tilde{p}-\eta\right)+\eta\left(-\beta\eta-kr-\alpha\tilde{q}\right)\\+r^{T}\left(\tilde{f}\left(x,\hat{x}\right)+\tilde{g}\left(x,\hat{x}\right)\hat{u}\left(\hat{x},W_{a}\right)-\alpha^{2}\tilde{p}-kr+k\eta+\alpha\eta\right),
\end{multline*}where $ \tilde{f}\left(x,\hat{x}\right)\coloneqq f\left(x\right)-f\left(\hat{x}\right) $ and $ \tilde{g}\left(x,\hat{x}\right)\coloneqq g\left(x\right)-g\left(\hat{x}\right)$. The time derivative of $ \Phi $ can be bounded above as\begin{multline*}
\dot{\Phi}\left(\tilde{p},r,\eta,t\right)\leq-\alpha^{3}\left\Vert \tilde{p}\right\Vert ^{2}-\left(k-\varpi_{1}\right)\left\Vert r\right\Vert ^{2}-\left(\beta-\alpha\right)\left\Vert \eta\right\Vert ^{2}\\
+\varpi_{1}\left(1+\alpha\right)\left\Vert r\right\Vert \left\Vert \tilde{p}\right\Vert +\varpi_{1}\left\Vert r\right\Vert \left\Vert \eta\right\Vert +\varpi_{3}\left\Vert r\right\Vert +\varpi_{2}\left\Vert r\right\Vert \left\Vert \tilde{W}_{a}\right\Vert 
\end{multline*}for all $ t\geq0 $ and for all $ x,\tilde{x}\in\overline{\B}\left(0,\chi\right) $, where $ \varpi_{1},\cdots,\varpi_{3} $ are positive constants that are independent of the learning gains.

The candidate Lyapunov function for the overall system is then defined as $ \mathscr{V}\left(Z,t\right)= \mathcal{V}\left(x\right)+\Theta\left(\tilde{W}_{c},\tilde{W}_{a},t\right)+\Phi\left(\tilde{p},r,\eta\right)$, where $ Z\coloneqq\begin{bmatrix}
x^{T}&\tilde{p}^{T}&r^{T}&\eta^{T}&\tilde{W}_{c}^{T}&\tilde{W}_{a}^{T}
\end{bmatrix}^{T} $. The time derivative of the candidate Lyapunov function can be bounded as \begin{equation*}
\dot{\mathscr{V}}\left(Z,t\right)\leq-W\left(x\right)-z^{T}\left(\frac{M+M^{T}}{2}\right)z+Pz+\iota_{1}\overline{\epsilon},
\end{equation*}where $ z\coloneqq \begin{bmatrix}\left\Vert \tilde{W}_{c}\right\Vert  & \left\Vert \tilde{W}_{a}\right\Vert  & \left\Vert \tilde{p}\right\Vert  & \left\Vert r\right\Vert  & \left\Vert \eta\right\Vert \end{bmatrix}^{T} $, $ P= $\begin{equation*}
\begin{bmatrix}k_{c}\iota_{8}\overline{\epsilon} & \left(k_{c}\iota_{7}+\iota_{3}+k_{a2}\overline{W}\right) & \iota_{4}\left(1+\alpha\right) & \left(\varpi_{3}+\iota_{4}\right) & \iota_{4}\end{bmatrix}
\end{equation*} $ M= $\footnotesize\begin{equation*}
\begin{bmatrix}k_{c}\underline{c} & -\left(k_{c}\iota_{6}+k_{a1}\right) & 0 & 0 & 0\\
0 & \left(k_{a1}+k_{a2}-k_{c}\iota_{5}\right) & -\iota_{2}\left(1+\alpha\right) & -\left(\iota_{2}+\varpi_{2}\right) & -\iota_{2}\\
0 & 0 & \alpha^{3} & -\varpi_{1}\left(1+\alpha\right) & 0\\
0 & 0 & 0 & \left(k-\varpi_{1}\right) & -\varpi_{1}\\
0 & 0 & 0 & 0 & \left(\beta-\alpha\right)
\end{bmatrix}.
\end{equation*}\normalsize Provided the matrix $ M+M^{T} $ is positive definite,\begin{equation*}
\dot{\mathscr{V}}\left(Z,t\right)\leq-W\left(x\right)-\underline{M}\left\Vert z\right\Vert ^{2}+\overline{P}\left\Vert z\right\Vert +\iota_{1}\overline{\epsilon},
\end{equation*}where $ \underline{M}\coloneqq\lambda_{\min}\left\{\frac{M+M^{T}}{2}\right\} $. Letting $ \underline{M}\eqqcolon\underline{M}_{1}+\underline{M}_{2} $ and letting $ \mathcal{W}:\R^{5*n+2*L}\to\R $ be defined as $ \mathcal{W}\left(Z\right)=-W\left(x\right)-\underline{M}_{1}\left\Vert z\right\Vert^{2}$, the bound\begin{equation}
\dot{\mathscr{V}}\left(Z,t\right)\leq-\mathcal{W}\left(Z\right),\forall\left\Vert Z\right\Vert \geq \mu, Z\in\overline{\B}\left(0,\frac{\chi}{3\left(1+\alpha\right)}\right),\label{eq:OFBADPVDotBound}
\end{equation}for all $ t\geq0 $. 

Using the bound in \eqref{eq:OFBADP1Gammabound} and the fact that the converse Lyapunov function is guaranteed to be time-independent, radially unbounded, and positive definite, Lemma 4.3 can be invoked to conclude that \begin{equation}
\underline{v}\left(\left\Vert Z\right\Vert \right)\leq V_{L}\left(Z,t\right)\leq\overline{v}\left(\left\Vert Z\right\Vert \right),\label{eq:OFBADPVBound}
\end{equation}
for all $t\in\R_{\geq 0}$ and for all $Z\in\R^{5n+2L}$, where $\underline{v},\overline{v}:\R_{\geq0}\rightarrow\R_{\geq0}$ are class $\mathcal{K}$ functions.

Provided the learning gains, the domain radius $ \chi $, and the basis functions for function approximation are selected such that $ M+M^{T} $ is positive definite and $ \mu<\overline{v}^{-1}\left(\underline{v}\left(\frac{\chi}{4\left(1+\alpha\right)}\right)\right) $, Theorem 4.18 in \cite{SCC.Khalil2002} can be invoked to conclude that Z is uniformly ultimately bounded. Since the estimates $W_{a}$ approximate the ideal weights $ W $, the policy $ \hat{u} $ approximates the optimal policy $ u^{*} $.

\begin{figure}
	\includegraphics[width=1\columnwidth]{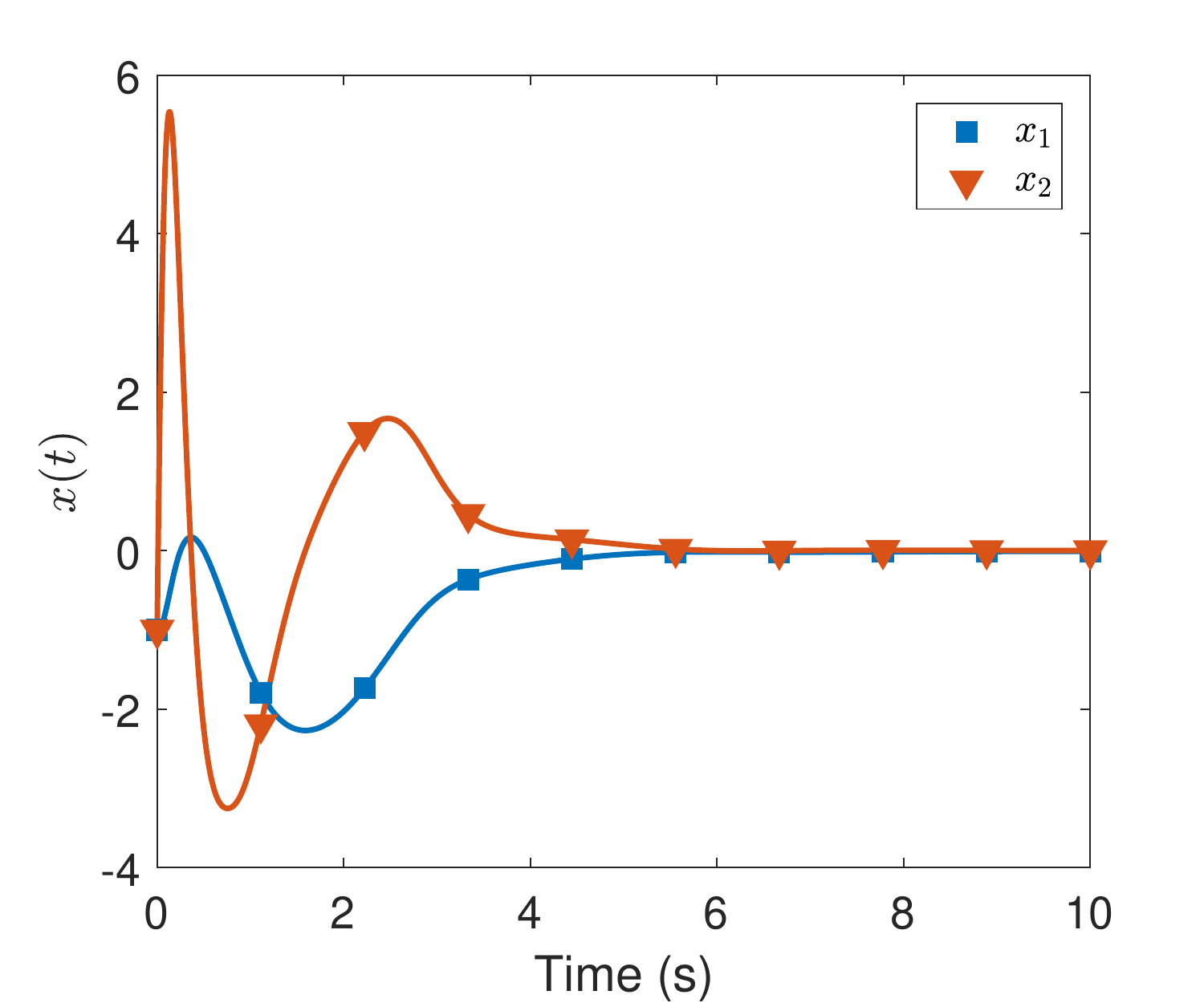}
	\caption{System state trajectories generated using the developed technique.}
	\label{fig:state}
\end{figure}
\begin{figure}
	\includegraphics[width=1\columnwidth]{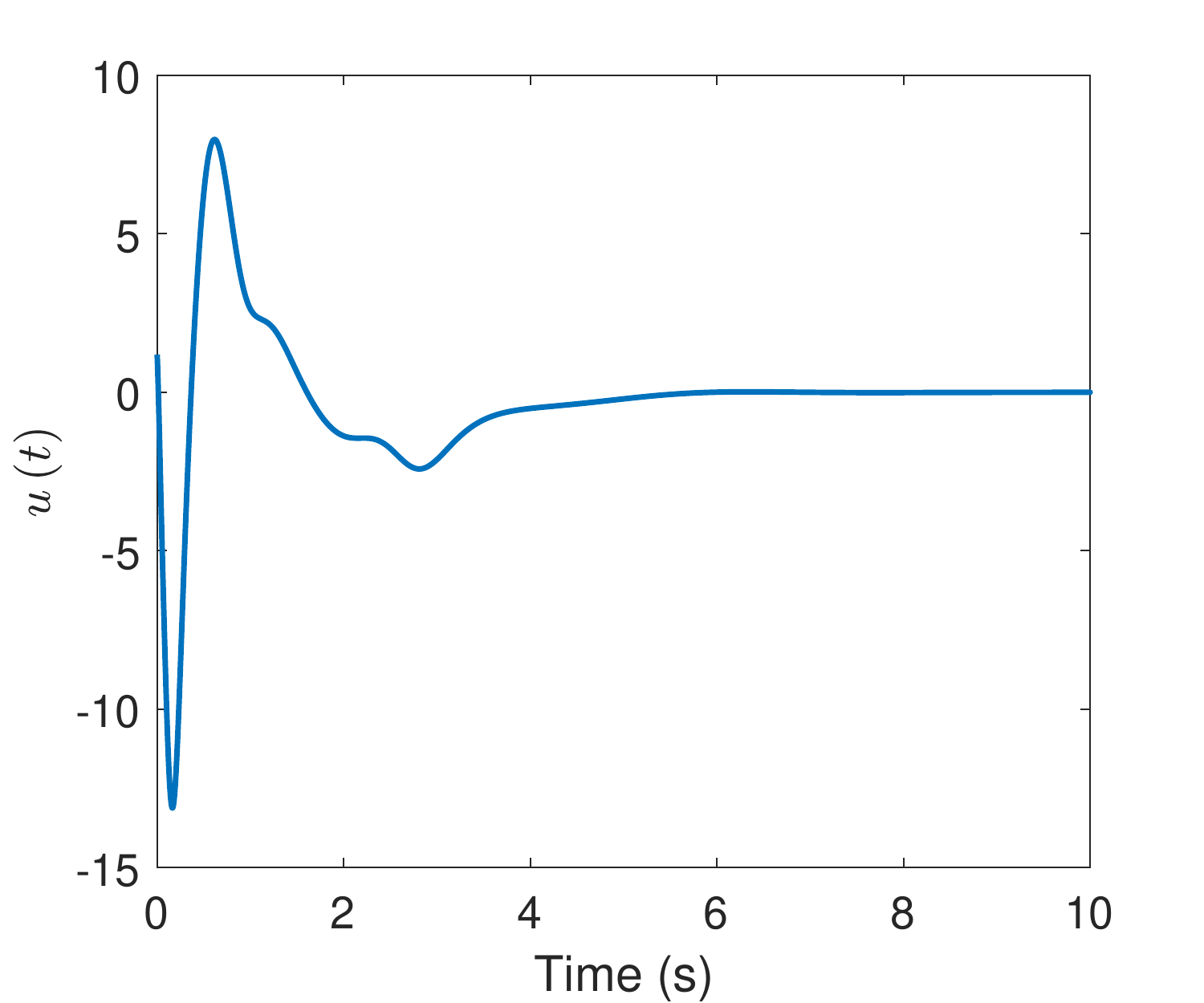}
	\caption{Control trajectories generated using the developed technique.}
	\label{fig:control}
\end{figure}
\begin{figure}
	\includegraphics[width=1\columnwidth]{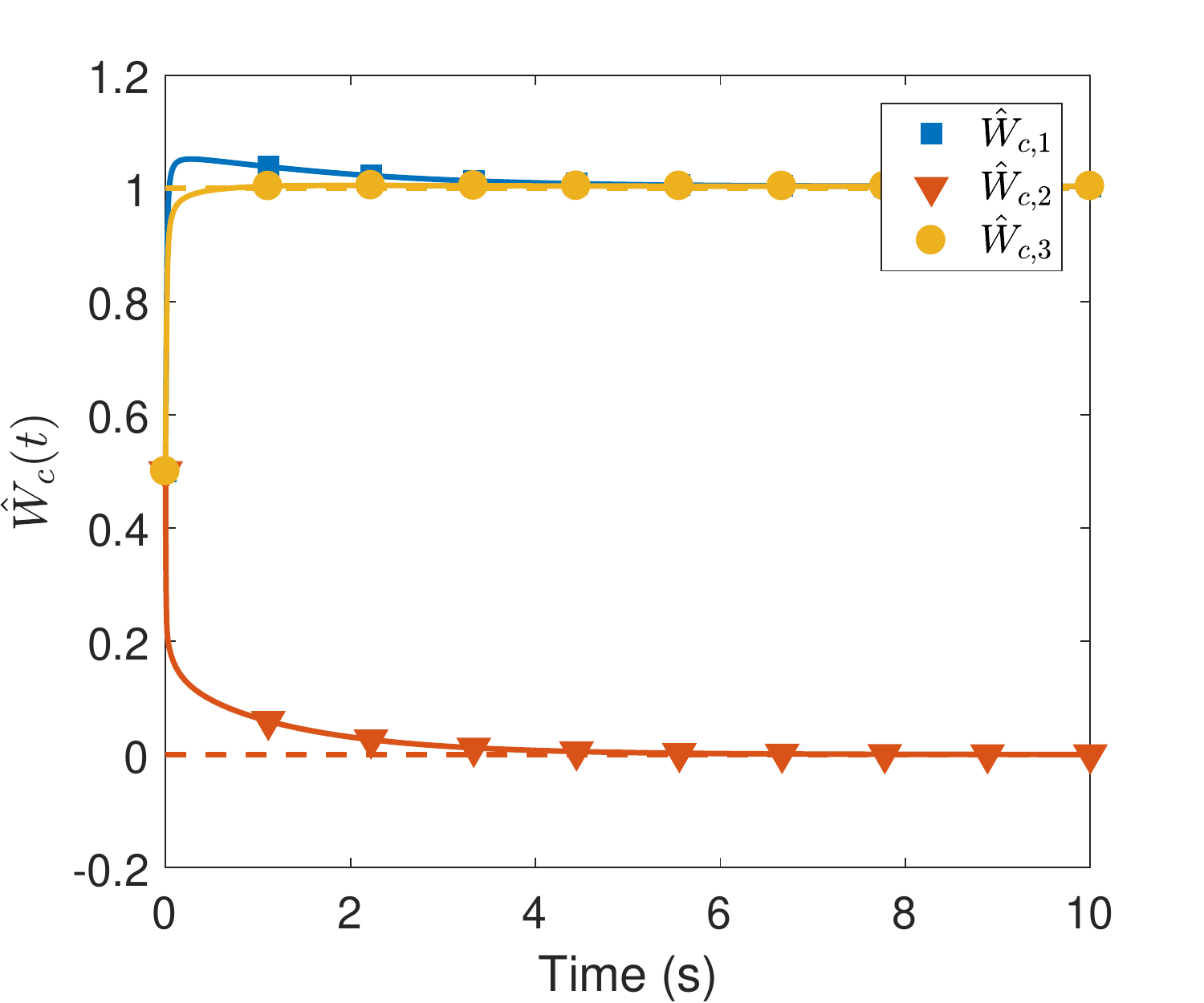}
	\caption{Critic weight estimates generated using the developed technique, and compared to the ideal values (marked with dashed lines).}
	\label{fig:wch}
\end{figure}
\begin{figure}
	\includegraphics[width=1\columnwidth]{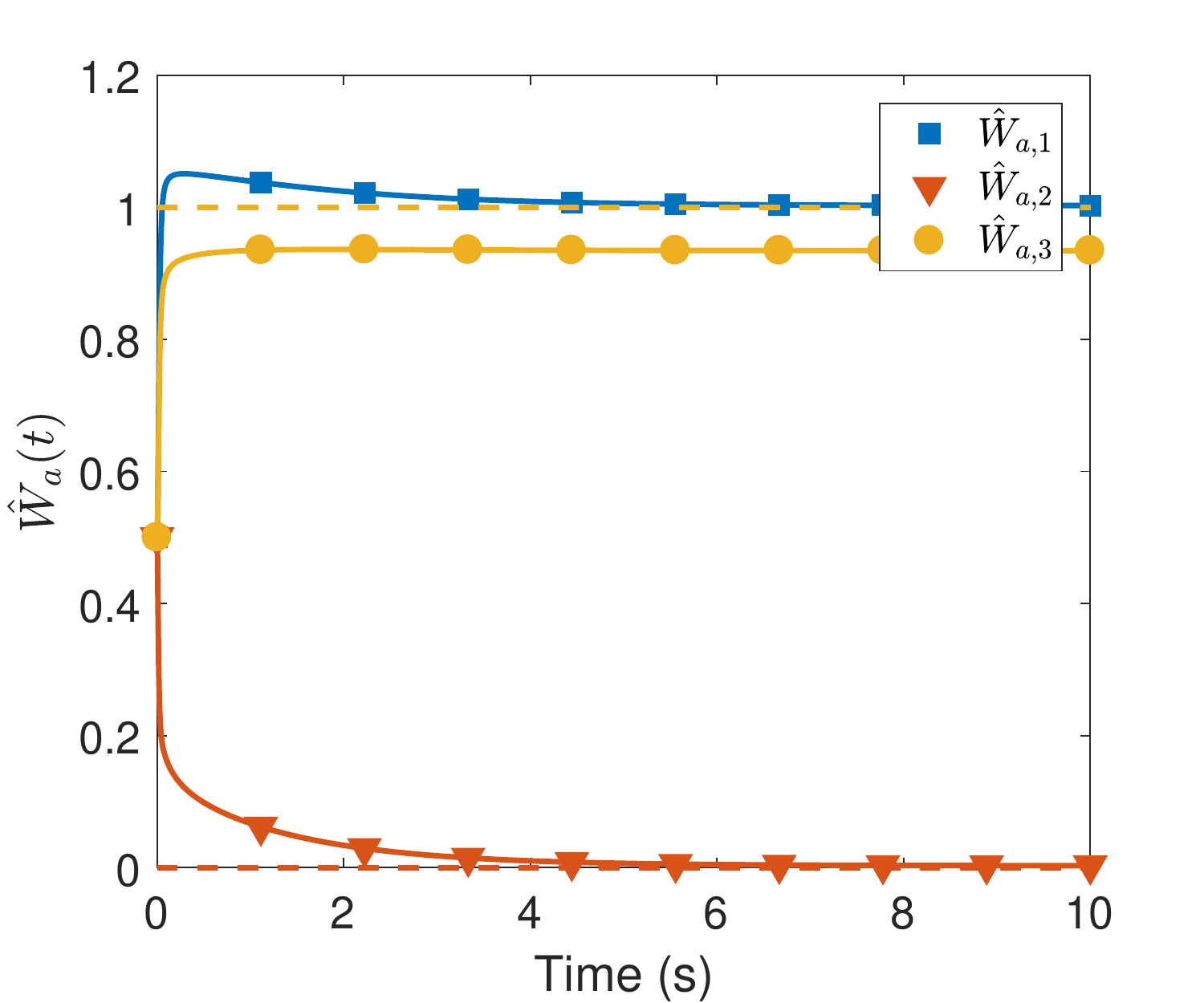}
	\caption{Actor weight estimates generated using the developed technique, and compared to the ideal values (marked with dashed lines).}
	\label{fig:wah}
\end{figure}
\begin{figure}
	\includegraphics[width=1\columnwidth]{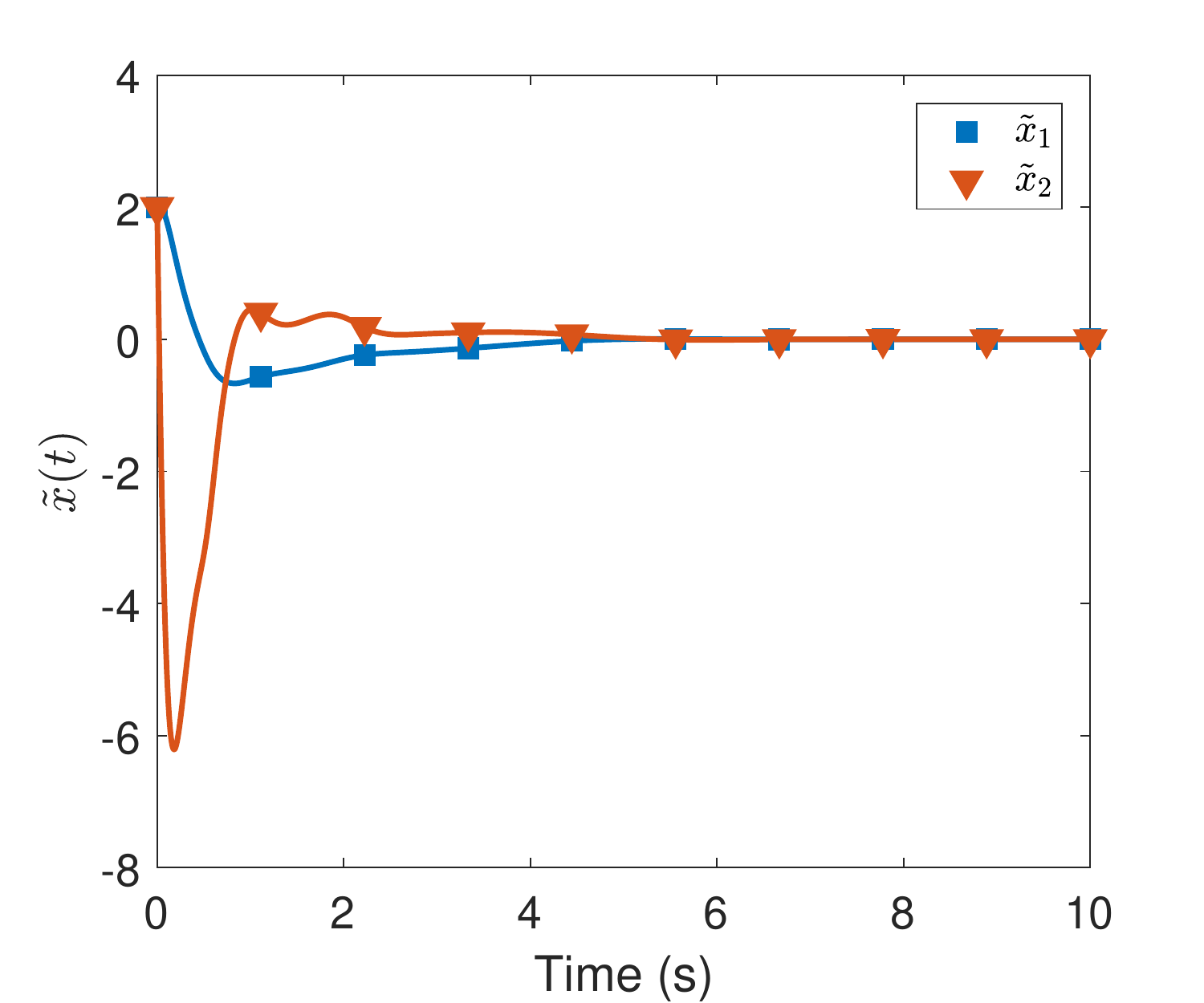}
	\caption{State estimation error.}
	\label{fig:xtilde}
\end{figure}
\section{\label{sec:Simulation Results}Simulation Results}
The performance of the developed controller is demonstrated by simulating a nonlinear, control affine system with a two dimensional state $ x=[x_{1},\:x_{2}]^{T} $. The system dynamics are described by \eqref{eq:Dynamics} where \begin{align*}
f\left(x\right)&=-x_{1}-\frac{1}{2}x_{2}\left(1-\left(\cos\left(2x_{1}\right)+2\right)^{2}\right)\\g\left(x\right)&=\cos\left(2x_{1}\right)+2.
\end{align*}The origin is an unstable equilibrium point of the unforced system $\dot{x}=f\left(x\right)$. The control objective is to minimize the cost in \eqref{eq:cost}, where $ Q\left(x\right)=q^{2} $ and $ R=1 $. For comparison purposes, the optimal value function for this problem is computed using the converse method in \cite{SCC.Nevistic.Primbs1996} as $ V^{*}\left(x\right)=x_{1}^{2}+x_{2}^{2} $.

The basis function $\sigma:\R^{2}\to\R^{3}$ for value function approximation is selected as $\sigma=\left[x_{1}^{2}, x_{1}x_{2}, x_{2}^{2}\right]^{T}$. Based on the analytical solution, the ideal weights are $W=\left[1,\:0,\:1\right]^{T}$. The data points for the simulation of experience in the update law \eqref{eq:OFBADP1criticupdate} are selected to be on a $5\times5$ grid around the origin. The learning gains are selected as $k_{c}=0.2$, $k_{a1}=100$, $k_{a2}=0.1$, $ \beta_{\gamma}=3 $, and $\nu=0.005$. The gains for the state estimator are selected as $ k=5 $, $ \alpha=0.2 $, and $ \beta=5 $. The initial conditions are selected as $ x\left(0\right)=[1,1]^{T} $, $ \hat{x}\left(0\right)=[-1,-1]^{T} $, $ W_{a}\left(0\right)=W_{c}\left(0\right)=[0.5,0.5,0.5]^{T} $, and $ \Gamma\left(0\right)=50\id_{3} $.

Figs. \ref{fig:state}-\ref{fig:xtilde} demonstrates that the system state is regulated to the origin, the generalized velocities are identified, and the actor and the critic weights converge to their true values. Furthermore, unlike previous results, a probing signal to ensure persistence of excitation\index{persistence of excitation} is not required.
\section{\label{sec:Conclusion}Conclusion}
An output-feedback MBRL method is developed for a class of second-order nonlinear systems. The control technique uses exact model knowledge and integrates a dynamic state estimator within the model-based reinforcement learning framework to achieve output-feedback MBRL. Simulation results demonstrate the efficacy of the developed method. 
Integration of simultaneous state and parameter estimation methods such as \cite{SCC.Kamalapurkar2017a} with the MBRL method to achieve output-feedback MBRL using uncertain models is a topic for future research.
\bibliographystyle{IEEEtran}
\bibliography{encr,sccmaster,scc}
\end{document}